\documentclass[12pt]{article}

\usepackage[left=1in, top=1in, right=1in, bottom=1in]{geometry}
\usepackage[semibold]{libertine}
\usepackage{amsmath}
\usepackage{amssymb}
\usepackage{amsthm}
\usepackage{paralist}
\usepackage{algorithmicx}
\usepackage{mathrsfs}
\usepackage{dsfont}
\usepackage[noend]{algpseudocode}
\usepackage[linesnumbered,lined,boxed,commentsnumbered]{algorithm2e}
\usepackage{color}
\usepackage[square,numbers]{natbib}
\usepackage[pdfpagelabels,pdfpagemode=None]{hyperref}
\usepackage{aliascnt}
\usepackage{tcolorbox}
\usepackage{siunitx}
\usepackage{cleveref}

\usepackage{mathtools}
\usepackage{bm}
\usepackage{todonotes}
\usepackage{thmtools} 
\usepackage{thm-restate}

\newtheorem{theorem}{Theorem}%
\newtheorem{lemma}{Lemma}%
\newtheorem{fact}{Fact}

\newtheorem{definition}{Definition}

\newtheorem*{remark}{Remark}
\newtheorem{property}{Property}

\usepackage[utf8]{inputenc}

\newcommand{\Xcomment}[1]{{}}





\newcommand{\noaccents}[1]{#1}

\newcommand{\newagentvar}[3][\noaccents]{%
\expandafter\newcommand\expandafter{\csname #2\endcsname}{#1{#3}}%
\expandafter\newcommand\expandafter{\csname #2s\endcsname}{#1{\boldsymbol{#3}}}%
\expandafter\newcommand\expandafter{\csname #2smi\endcsname}[1][i]{#1{\boldsymbol{#3}}_{-##1}}%
\expandafter\newcommand\expandafter{\csname #2i\endcsname}[1][i]{#1{#3}_{##1}}%
\expandafter\newcommand\expandafter{\csname #2k\endcsname}[1][i]{#1{#3}_{##1}}%
\expandafter\newcommand\expandafter{\csname #2ith\endcsname}[1][i]{#1{#3}_{(##1)}}%
}

\DeclareMathOperator*{\argmax}{arg\,max}

\makeatletter

\newcommand{\taolin}[1]{\textcolor{blue}{[Tao says\@ifnotempty{#1}{: #1}]}}

\makeatother

\newcommand{\opt}{\mathcal{M}^*}

\DeclareMathOperator{\PM}{PM}
\DeclareMathOperator{\sw}{SW}

\newagentvar{sample}{x}
\newagentvar{val}{v}
\newagentvar{bid}{b}
\newagentvar{strat}{s}
\newagentvar{tstrat}{t}

\newagentvar[\tilde]{sampleval}{\val}

\title{On Clearing Prices in Matching Markets: A Simple Characterization without Duality}

\author{Xiaoming Li \\ Peking University \\  \href{mailto:lxm@pku.edu.cn}{lxm@pku.edu.cn} 
   \and Tao Lin\footnote{Authors are listed alphabetically.} \\ Peking University \\ \href{mailto:lin_tao@pku.edu.cn}{lin\_tao@pku.edu.cn} }

\usepackage{natbib}
\usepackage{graphicx}

\begin{document}

\maketitle

\begin{abstract}
Duality of linear programming is a standard approach to the classical weighted maximum matching problem. From an economic perspective, the dual variables can be regarded as prices of products and payoffs of buyers in a two-sided matching market. Traditional duality-based algorithms, e.g., Hungarian, essentially aims at finding a set of prices that clears the market. Under such market-clearing prices, a maximum matching is formed when buyers buy their most preferred products respectively. We study the property of market-clearing prices without the use of duality, showing that: (1) the space of market-clearing prices is convex and closed under element-wise maximum and minimum operations; (2) any market-clearing prices induce all maximum matchings. 
\end{abstract}

\section{Introduction}
The \emph{weighted bipartite matching problem}, or \emph{assignment problem}, is a classical problem in graph theory: how can we find a matching with the maximum total weight (maximum matching for short) in a bipartite graph? There is a direct connection between a maximum matching and a socially optimal allocation of products to buyers in a \emph{matching market}. In this market, there are $n$ buyers and $n$ product. Each product can be allocated to at most one buyer, and each buyer wants at most one product. Let $V=(v_{ij})\in\mathbb{R}_+^{n\times n}$ be a matrix of valuations where buyer $i$ thinks product $j$ is worth $v_{ij}$ dollars. Clearly, valuations constitute a complete weighted bipartite graph $G$ with $n$ left nodes representing buyers, $n$ right nodes representing products, and $n^2$ edges where $v_{ij}$ is the weight of edge $(i, j)$. A valid allocation of products to buyers correspond to a \emph{matching} of $G$. The weight or social welfare of a matching is the sum of $v_{ij}$ over the buyer-product pair $(i, j)$ in it. So, how can we find a matching that maximizes the social welfare? 

\paragraph{Duality and matching market}
A standard approach to the maximum matching problem is the duality of linear programming.
Let $x_{ij}\in\{0, 1\}$ denote whether $i$ gets $j$. First write down our goal in a linear program:  
\begin{align*}
    \max_{x_{ij}}~~ & \sum_{i=1}^n\sum_{j=1}^n v_{ij}x_{ij}, \\
    \text{s.t.}~~ & \sum_{j=1}^n x_{ij} = 1 ~~ (i=1, ..., n), \\
    & \sum_{i=1}^n x_{ij} = 1 ~~ (j=1, ..., n), \\
    & x_{ij}\in\{0, 1\}, \forall i, j\\
    & \text{ (or replace by $x_{ij}\ge0$, which does not affect the solution).}
\end{align*}
Then derive its dual program: 
\begin{align*}
    \min_{u_i, p_j}~~ & \sum_{i=1}^n u_i + \sum_{j=1}^n p_j, \\
    \text{s.t.}~~ & u_i + p_j \ge v_{ij}, ~~\forall i,j.
\end{align*}

Many algorithms for the maximum matching problem, including the famous Hungarian algorithm proposed in the seminal work \cite{kuhn1955hungarian} of Kuhn in 1955, are based on duality. Those algorithms update the dual variables $u_i$ and $p_j$ dynamically to obtain a solution to the dual program which can then be converted to a maximum matching in the original linear program.

Furthermore, there is a natural economic interpretation of the duality approach. Dual variables on the product side, $p_j$, can be regarded as \emph{prices}. If buyer $i$ buys product $j$, she pays $p_j$ and obtains a \emph{payoff} of $v_{ij}-p_j$. Dual variables on the buyer side, $u_i$, then become the maximum payoff buyer $i$ can obtain by selecting her most favorable product. This connection between dual variables and prices and payoffs has been discovered by Shapley and Shubik in 1971 \cite{shapley1971assignment} and some following works, e.g., \cite{roth1992two}.

\paragraph{Market-clearing prices} Although the duality approach and its economic interpretation are standard, a beginner who is not familiar with linear programming still cannot understand those complicated duality-based algorithms like Hungarian easily. Can we solve the maximum matching problem with simpler techniques? Fortunately, an intuitive version of Hungarian algorithm has been discussed in the textbook \textit{Networks, Crowds, and Markets} by Easley and Kleinberg \cite{easley2010networks}, solely in the language of prices and payoffs, without mentioning dual variables at all. Once we focus on prices and payoffs, the picture becomes vivid: The algorithm in \cite{easley2010networks} increases the price of a product when too many buyers want it, so the payoffs of that product to buyers decrease accordingly. Adding an edge to a matching corresponds to a buyer buying a product. 

Moreover, when the above algorithm ends, the resulting prices obtain a nice property called \emph{market-clearing} (Definition \ref{def:market-clearing_price}). For such a price vector $\bm{p}=(p_1, ..., p_n)$, the market clears in the sense that each buyer is able to buy her most favorable product (the product that maximizes her payoff $v_{ij}-p_j$), without conflicting with other buyers. Thus, a market-clearing price vector induces a socially optimal allocation of products to buyers, exactly maximizing the weight of the matching (Property \ref{property:included}). In order words, finding a maximum matching in a weighted bipartite graph boils down to finding a clearing price vector in a matching market. 

The market-clearing price vector may not be unique. Actually, as argued in \cite{shapley1971assignment}, the set of all market-clearing price vectors is exactly the $p_j$-part of the solution set to the dual linear program mentioned above, and thus has some properties like convexity. What other properties do market-clearing prices have? Can we show those properties \emph{without}  the help of duality? 

\paragraph{Our results} We ask two questions concerning market-clearing prices: 
\begin{enumerate}
    \item If there are multiple market-clearing price vectors, what properties does the set of market-clearing price vectors have? 
    \item If there are multiple matchings with the maximum weight in a bipartite graph, can each and every maximum matching be induced by some market-clearing price vector?
\end{enumerate}

We give the following answers:
\begin{enumerate}
    \item The set of market-clearing prices is closed under the operation of convex combination, element-wise maximum, and element-wise minimum (Theorem \ref{thm:space_of_prices} in Section \ref{sec:space_of_prices}).
    \item Yes. \emph{Any} market-clearing price vector induces \emph{all} maximum matchings in a bipartite graph (Theorem \ref{thm:all_maximum_matching} in Section \ref{sec:any_to_all}).
\end{enumerate}

Our proofs are elementary and do not involve the duality of linear programming at all. Instead, our arguments rely on a key lemma (Lemma \ref{lem:two_prices}) which states that all market-clearing price vectors induce a same set of maximum matchings. Then our first theorem follows almost immediately. The proof of our second theorem requires some additional knowledge of graph theory. We believe that Lemma \ref{lem:two_prices} is of independent interest. We also believe that our arguments can be taught in related courses, or assigned as exercises, even at an undergraduate level, because of their appropriate level of difficulty. 

\paragraph{Related works}
Section 10 in \cite{easley2010networks} is an easy-to-read introduction to the bipartite matching problem. It provides a detailed description of how to update prices to satisfy the market-clearing property. One can refer to \cite{burkard2012assignment} for the duality approach and many algorithms for the bipartite matching problem. \cite{pentico2007assignment} provides a large list of variants of the assignment problem.


\section{Preliminaries}

\paragraph{A matching market}
There are $n$ buyers and $n$ products. Each product $j$ can be allocated (sold) to at most one buyer, and each buyer $i$ wants at most one product. 
Let $V=(v_{ij})\in\mathbb{R}_+^{n\times n}$ be a matrix of valuations where buyer $i$ values product $j$ at $v_{ij}\ge 0$. Valuations constitute a complete weighted bipartite graph $G$ which has $n$ left nodes representing buyers, $n$ right nodes representing products, and $n^2$ edges where the weight of edge $(i, j)$ is $v_{ij}$. Let $p_j$ be the price of product $j$. If a buyer $i$ buys product $j$, she pays $p_j$ and obtains a \emph{payoff} of $v_{ij}-p_j$.


\paragraph{Allocation and social welfare}
Clearly, a valid allocation of products to buyers correspond to a \emph{matching} $M$ of $G$, which is a subset of edges that do not share any common node with each other. Given a matching $M$, we can define the \emph{social welfare} $\sw(M)$ (or the weight of $M$) as the sum of the valuations over all buyers for their received products,
\begin{equation}
    \sw(M) = \sum_{(i, j)\in M} v_{ij}. 
\end{equation}
We will focus mainly on matchings that maximize the social welfare (maximum matchings). Since we have assumed $v_{ij}\ge 0$, it is never worse to choose a \emph{perfect} matching, which means that the size of matching $|M|=n$, i.e., each buyer buys some product (and all products are sold). So, we assume that maximum matchings are always perfect. 


\paragraph{Preferred-product graph}
Given a set of prices $\bm{p}=(p_1, ..., p_n)$, if buyer $i$ wants to maximize her payoff, she will buy a product $j$ that maximizes $v_{ij}-p_j$. Assume that she can choose any product if there are multiple products with the maximum payoff.
We can describe the set of products each buyer is willing to buy by another bipartite graph, named \emph{preferred-product graph}. Given a price vector $\bm{p}=(p_1, ..., p_n)$, define the following sub-graph of $G$
\begin{equation}
G(\bm{p}):=\bigg\{(i, j)\in G \bigg\vert v_{ij}-p_j \ge v_{ik}-p_k, \forall k=1, ..., n\bigg\}
\end{equation}
as the preferred-product graph of $\bm{p}$. 

Preferred-product graphs have a useful property: 
\begin{fact}\label{preferred-product_graph}
A perfect matching (a matching of size $n$) in the preferred-product $G(\bm{p})$, if exists, is a matching in $G$ that maximizes the social welfare. 
\end{fact}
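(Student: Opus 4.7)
The plan is to compare the welfare of a perfect matching $M$ in $G(\bm{p})$ with that of an arbitrary perfect matching $M'$ in $G$, by reallocating each buyer's contribution into a ``payoff part'' and a ``price part.'' Since any perfect matching corresponds to a permutation $\sigma$ of $\{1,\dots,n\}$ (with buyer $i$ matched to product $\sigma(i)$), the sum of prices $\sum_i p_{\sigma(i)}$ is the same $\sum_j p_j$ regardless of which matching we pick. So the comparison of social welfares reduces to a comparison of total buyer payoffs.

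Concretely, I would write
\begin{equation*}
    \sw(M) \;=\; \sum_{i=1}^n v_{i\sigma(i)} \;=\; \sum_{i=1}^n \bigl(v_{i\sigma(i)}-p_{\sigma(i)}\bigr) + \sum_{j=1}^n p_j,
\end{equation*}
and analogously for $M'$ with its associated permutation $\sigma'$. Since $M$ is a perfect matching in the preferred-product graph $G(\bm{p})$, every edge $(i,\sigma(i))$ of $M$ satisfies the defining inequality of $G(\bm{p})$; in particular, taking $k=\sigma'(i)$ yields $v_{i\sigma(i)} - p_{\sigma(i)} \ge v_{i\sigma'(i)} - p_{\sigma'(i)}$ for every buyer $i$. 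Summing these $n$ inequalities and adding back the common price term $\sum_j p_j$ gives $\sw(M) \ge \sw(M')$, so $M$ maximizes the social welfare among all perfect matchings.

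To conclude, I would invoke the remark already made in the preliminaries that, because $v_{ij}\ge 0$, some maximum matching of $G$ can always be taken to be perfect; thus maximizing over perfect matchings is the same as maximizing over all matchings. There is no real obstacle here: the only subtlety is making sure the comparison is against \emph{all} matchings rather than just perfect ones, which is exactly what the nonnegativity assumption handles. The entire argument is a one-line telescoping trick once the decomposition into payoff plus price is written down.
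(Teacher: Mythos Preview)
Your proof is correct and follows essentially the same approach as the paper: decompose each buyer's valuation into payoff plus price, observe that the total price part $\sum_j p_j$ is the same for any perfect matching, and use the defining inequality of $G(\bm{p})$ to compare payoffs term by term. The paper handles non-perfect matchings by explicitly completing them (using $v_{ij}\ge 0$), whereas you invoke the preliminary remark to the same effect; this is a cosmetic difference only.
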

\begin{proof}
Let $M^*$ be a perfect matching in $G(\bm{p})$. 
For any other matching $M'$ in $G$, if $M'$ is not perfect, we complete it to be a perfect matching $M$ by allocating unsold products to unmatched buyers arbitrarily. Note that $\sw(M')\le\sw(W)$ because $v_{ij}\ge 0$. Then: 
\begin{align*}
    \sw(M^*) - \sw(M) = & \sum_{(i, j_i^*)\in M^*} (v_{ij_i^*} - p_{j_i^*} + p_{j_i^*}) - \sum_{(i, j_i)\in M} (v_{ij_i} - p_{j_i} + p_{j_i}) \\
    = & \sum_{i=1}^n p_{j_i^*} - \sum_{i=1}^n p_{j_i} +  \sum_{i=1}^n \bigg( (v_{ij_i^*} - p_{j_i^*}) - (v_{ij_i} - p_{j_i}) \bigg) \\
    \ge & 0 + \sum_{i=1}^n 0, 
\end{align*}
where the inequality follows from the fact that each $(i, j_i^*)$ is in $G(\bm{p})$.
\end{proof}

\begin{definition}[Market-clearing prices]\label{def:market-clearing_price}
A price vector $\bm{p}=(p_1, ..., p_n)$ is \emph{market-clearing}, if its preferred-product graph $G(\bm{p})$ contains a perfect matching. 
\end{definition}

For convenience, we say $\bm{p}$ \emph{induces} a matching $M$ if $M$ is a perfect matching in $G(\bm{p})$. For a price vector $\bm{p}$, let $\PM(\bm{p})$ denote the set of matchings induced by $\bm{p}$. And we use $\opt=\argmax_M\{\sw(M)\}$ to denote the set of all maximum (perfect) matchings in $G$. 
By Definition \ref{def:market-clearing_price}, together with Fact \ref{preferred-product_graph}, we immediately have: 
\begin{property}[Market-clearing prices induce maximum matchings]\label{property:included}
For any market-clearing price vector $\bm{p}$, 
\begin{equation}
     \emptyset\ne \PM(\bm{p})\subseteq \opt.
\end{equation}
\end{property}

\section{Space of Market-Clearing Prices}\label{sec:space_of_prices}
In this section we answer the first question: if there are multiple market-clearing price vectors, what properties does the space of all such price vectors have?
\begin{theorem}[Closure of space]\label{thm:space_of_prices}
Suppose price vectors $\bm{p}$ and $\bm{q}$ are market-clearing. Let $\bm{r}\in\mathbb{R}^n$ be another price vector obtained by transforming $\bm{p}$ and $\bm{q}$ in the following four ways:  
\begin{enumerate}
    \item Diagonal shifting: $\bm{r}=\bm{p} + (t, ..., t)$, for any $t\in\mathbb{R}$. That is, moving $\bm{p}$ in the diagonal direction $\bm{1}=(1, ..., 1)$ forward or backward. 
    \item Convex combination: $\bm{r} = \alpha\bm{p} + (1-\alpha)\bm{q}$, for any $\alpha\in[0, 1]$.
    \item Element-wise maximum: $\bm{r}=(r_1, ..., r_n)$ where $r_i=\max\{p_i, q_i\}$ for each $i=1, ..., n$.
    \item Element-wise minimum: $\bm{r}=(r_1, ..., r_n)$ where $r_i=\min\{p_i, q_i\}$ for each $i=1, ..., n$.
\end{enumerate}
Then $\bm{r}$ is also market-clearing.
\end{theorem}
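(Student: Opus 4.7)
The plan is to exhibit, in each of the four cases, a single perfect matching that already lies in $G(\bm{r})$. The key ingredient will be the forthcoming Lemma~\ref{lem:two_prices}, which asserts that any two market-clearing price vectors induce the \emph{same} set of perfect matchings, i.e.\ $\PM(\bm{p})=\PM(\bm{q})$. Granting this, I fix once and for all a matching $M\in\PM(\bm{p})=\PM(\bm{q})$; each edge $(i,j)\in M$ is then a preferred edge under both prices, so $(i,j)\in G(\bm{p})\cap G(\bm{q})$. It suffices to verify $M\subseteq G(\bm{r})$ in each of the four constructions, which immediately shows $G(\bm{r})$ contains a perfect matching and hence that $\bm{r}$ is market-clearing.

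The diagonal-shift case is immediate: $v_{ij}-(p_j+t)=(v_{ij}-p_j)-t$ shifts every payoff by the same constant, so the argmax over $j$ is unchanged and in fact $G(\bm{r})=G(\bm{p})$; this case does not even need the lemma. For the convex combination $\bm{r}=\alpha\bm{p}+(1-\alpha)\bm{q}$, the payoff $v_{ij}-r_j$ is the corresponding convex combination of the $\bm{p}$- and $\bm{q}$-payoffs, so for every $(i,j)\in M$ and any $k$,
\[
v_{ij}-r_j=\alpha(v_{ij}-p_j)+(1-\alpha)(v_{ij}-q_j)\ge\alpha(v_{ik}-p_k)+(1-\alpha)(v_{ik}-q_k)=v_{ik}-r_k,
\]
using $(i,j)\in G(\bm{p})\cap G(\bm{q})$ coordinatewise.

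The element-wise max and min cases hinge on the identities
\[
v_{ij}-\max\{p_j,q_j\}=\min\{v_{ij}-p_j,\,v_{ij}-q_j\},\qquad v_{ij}-\min\{p_j,q_j\}=\max\{v_{ij}-p_j,\,v_{ij}-q_j\}.
\]
For the max case, fix $(i,j)\in M$ and $k$, and WLOG take $p_k\ge q_k$ so that $v_{ik}-r_k=v_{ik}-p_k$. Then both $v_{ij}-p_j\ge v_{ik}-p_k$ (from $(i,j)\in G(\bm{p})$) and $v_{ij}-q_j\ge v_{ik}-q_k\ge v_{ik}-p_k$ (from $(i,j)\in G(\bm{q})$ together with $q_k\le p_k$) hold, so their minimum does too, giving $v_{ij}-r_j\ge v_{ik}-r_k$. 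The min case is symmetric: WLOG $p_k\le q_k$, so $v_{ik}-r_k=v_{ik}-p_k\le v_{ij}-p_j\le\max\{v_{ij}-p_j,v_{ij}-q_j\}=v_{ij}-r_j$, where the middle step uses $(i,j)\in G(\bm{p})$.

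The main obstacle is really Lemma~\ref{lem:two_prices} itself: without a matching that is simultaneously induced by $\bm{p}$ and $\bm{q}$, the convex combination and element-wise arguments have no starting point, since $G(\bm{p})\cap G(\bm{q})$ need not a priori carry a perfect matching. Once that lemma is granted, each of the four transformations reduces to a short payoff inequality and no appeal to duality is needed.
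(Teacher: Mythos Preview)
Your proposal is correct and follows essentially the same approach as the paper: both reduce all four closure properties to exhibiting a common perfect matching $M\in\PM(\bm{p})=\PM(\bm{q})$ via Lemma~\ref{lem:two_prices} and then checking $M\subseteq G(\bm{r})$ by elementary payoff inequalities. The only cosmetic difference is that for the element-wise minimum the paper offers an alternative geometric argument (intersecting the diagonal through $\bm{r}$ with the line $\bm{pq}$ and invoking the already-proved diagonal-shift and convexity closures), while you give the symmetric max/min swap that the paper explicitly notes also works.
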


Now we prove Theorem \ref{thm:space_of_prices}. The first item follows from the definition of preferred-product graph $G(\bm{p})$: 
\begin{itemize}
\item Proof for diagonal shifting. Note that if all prices are increased or decreased by a same amount $t$, the quantitative relation between the payoffs of any two products remains unchanged for each buyer, so $G(\bm{r})=G(\bm{p})$, and $\bm{r}$ is market-clearing as $\bm{p}$ is. 
\end{itemize}

For the other three items, let $S$ be the set of all market-clearing price vectors. The second item in Theorem \ref{thm:space_of_prices} states that $S$ is closed under convex combination. This property is also called the \emph{convexity} of $S$. For readers who are familiar with the duality approach, the convexity of $S$ may seem trivial: given that the space of market-clearing prices is exactly (a projection of) the solution set of the dual program, $S$ must be convex because the programming is linear.

However, the above argument cannot be easily generalized to prove that $S$ is also closed under element-wise maximum and minimum. In order to prove Theorem \ref{thm:space_of_prices} without duality, we first prove the following key lemma which implies that all market-clearing price vectors induce a same set of matchings. 

\begin{lemma}[Same set of induced matchings]\label{lem:two_prices}
For any two market-clearing price vectors $\bm{p}$ and $\bm{q}$,
\[\PM(\bm{p}) = \PM(\bm{q}).\]
\end{lemma}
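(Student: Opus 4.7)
The plan is to establish both inclusions $\PM(\bm{p}) \subseteq \PM(\bm{q})$ and $\PM(\bm{q}) \subseteq \PM(\bm{p})$ by a symmetric payoff-counting argument, so I only need to describe one direction, namely that every $M' \in \PM(\bm{q})$ lies in $\PM(\bm{p})$. The key mechanism will be comparing the total weights of two perfect matchings through the lens of a single price vector.

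First I would pick any $M \in \PM(\bm{p})$ (which exists because $\bm{p}$ is market-clearing) and write $M = \{(i, \sigma(i))\}$ and $M' = \{(i, \tau(i))\}$ for permutations $\sigma$ and $\tau$. Since $M$ is a perfect matching in $G(\bm{p})$, every buyer $i$ achieves her maximum payoff at product $\sigma(i)$ under prices $\bm{p}$, giving the pointwise inequality $v_{i\sigma(i)} - p_{\sigma(i)} \ge v_{i\tau(i)} - p_{\tau(i)}$ for each $i$. I would then sum these inequalities over $i$ and invoke the crucial identity $\sum_i p_{\sigma(i)} = \sum_i p_{\tau(i)} = \sum_j p_j$ (both sides are merely reorderings of the full price vector), which makes the price terms cancel and yields $\sw(M) \ge \sw(M')$. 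Running the identical argument with the roles of $\bm{p}$ and $\bm{q}$, and of $M$ and $M'$, swapped yields $\sw(M') \ge \sw(M)$, so the two social welfares coincide.

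With $\sw(M) = \sw(M')$ in hand, the sum of the pointwise inequalities becomes an equality, forcing each individual inequality to be tight: $v_{i\tau(i)} - p_{\tau(i)} = v_{i\sigma(i)} - p_{\sigma(i)}$ for every $i$. Since the right-hand side is buyer $i$'s maximum attainable payoff at prices $\bm{p}$, this is exactly the condition $(i, \tau(i)) \in G(\bm{p})$, so $M'$ is a perfect matching in $G(\bm{p})$, i.e., $M' \in \PM(\bm{p})$. Swapping the roles of $\bm{p}$ and $\bm{q}$ establishes the reverse inclusion and closes the argument.

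The only nonroutine step I anticipate is noticing the permutation-symmetric cancellation of prices, which converts a per-edge optimality condition into a comparison of social welfares that can be squeezed from both sides; once that observation is made, everything else follows directly from the definitions of $G(\bm{p})$ and perfect matching, with no appeal to duality or any heavier graph-theoretic machinery.
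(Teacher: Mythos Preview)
Your proof is correct, and it is genuinely different from the paper's. The paper argues by contradiction: it assumes some edge $(i_1,j_1)$ of a matching $M\in\PM(\bm p)$ is missing from $G(\bm q)$, traces an alternating cycle between $M$ and a perfect matching $N\in\PM(\bm q)$, and obtains a telescoping chain of inequalities between price differences that sums to the absurdity $0<0$. Your route is a direct global squeeze: sum the pointwise optimality inequalities for $M$ under $\bm p$, cancel $\sum_j p_j$ on both sides via the permutation identity to get $\sw(M)\ge\sw(M')$, do the same with $\bm q$ to get the reverse inequality, and then read off edge-by-edge tightness from the resulting equality of sums of nonnegative terms.

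Your argument is arguably cleaner: it reuses exactly the cancellation trick already present in the proof of Fact~\ref{preferred-product_graph}, avoids the ``without loss of generality'' cycle relabeling, and never needs the strict inequality that drives the paper's contradiction. In effect you are observing that Property~\ref{property:included} already forces $\sw(M)=\sw(M')$, and that this equality, fed back into the summed optimality inequalities for $\bm p$, pins every term. The paper's cycle argument, on the other hand, is more local and combinatorial; it does not rely on both matchings being perfect globally, only on the cycle structure of their symmetric difference, which could be useful in extensions where the global price-sum cancellation is unavailable.
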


\begin{proof}
Recall that $\PM(\bm{p}), \PM(\bm{q})$ are the sets of perfect matchings in the preferred-product graphs $G(\bm{p}), G(\bm{q})$. 

Let $M=\{(i_1, j_1), (i_2, j_2), ..., (i_n, j_n)\}$ be a perfect matching in $G(\bm{p})$, we will show that $M$ is also a perfect matching in $G(\bm{q})$. To do this, we only need to prove that the all the edges $(i_1, j_1), ..., (i_n, j_n)$ are in $G(\bm{q})$. 

Prove by contradiction. Suppose that edge $(i_1, j_1)$ is not in $G(\bm{q})$. Since $\bm{q}$ is a market-clearing, there exists a perfect matching $N$ in $G(\bm{q})$. Because $(i_1, j_1)\notin G(\bm{q})$, $i_1$ must be matched to a product other than $j_1$ in $N$. Without loss of generality, assume $i_1$ is matched to $j_2$; then $i_2$ cannot be matched to $j_2$, and we assume $i_2$ is matched to $j_3$; ...; until some $k\le n$, where $i_{k-1}$ is matched to $j_k$, and $i_k$ must be matched back to $j_1$ in $N$. Now, $(i_1, j_2), (i_2, j_3), ..., (i_k, j_1) \in N$. 

Consider $(i_1, j_1)$ and $(i_1, j_2)$. Because $(i_1, j_1)$ is preferred in $G(\bm{p})$, we have $v_{i_1, j_1}-p_{j_1} \ge v_{i_1, j_2}-p_{j_2}$. Because $(i_1, j_2)$ is preferred in $G(\bm{q})$ and $(i_1, j_1)$ is \textbf{NOT} preferred in $G(\bm{q})$, we have: $v_{i_1, j_1}-q_{j_1} < v_{i_1, j_2}-q_{j_2}$. Subtracting the two inequalities gives: 
$$q_{j_2}-q_{j_1} < p_{j_2} - p_{j_1}. $$

Consider other edges: $(i_2, j_2)$ is preferred in $G(\bm{p})$ while $(i_2, j_3)$ is preferred in $G(\bm{q})$, we have: 
$$q_{j_3}-q_{j_2} \le p_{j_3} - p_{j_2}. $$
Similarly, 
$$\vdots$$
$$q_{j_1}-q_{j_k} \le p_{j_1} - p_{j_k}.$$ 

Summing up the above inequalities, we have $0<0$, a contradiction. 

Thus, $M$ is a perfect matching in $G(\bm{q})$. Symmetrically, each perfect matching in $G(\bm{q})$ is a perfect matching in $G(\bm{p})$. Thus $\PM(\bm{p})=\PM(\bm{q})$. 
\end{proof}

\begin{remark}
Lemma \ref{lem:two_prices} is quite interesting in itself. It implies that any two preferred-product graphs, as long as they have at least one perfect matching respectively, must share the same set of all perfect matchings, even though the two graphs can be different. 
\end{remark}


Now we are ready to prove that the space $S=\{\bm{p}\in\mathbb{R}^n \mid \bm{p}\text{ is market-clearing}\}$ is closed under all transformations besides diagonal shifting.

\begin{itemize}
    \item Convex combination. For any $\bm{p}, \bm{q}\in S$, $\alpha\in[0, 1]$, let $\bm{r}=\alpha\bm{p} + (1-\alpha)\bm{q}$. By Lemma \ref{lem:two_prices}, $\bm{p}$ and $\bm{q}$ induce at least one common perfect matching, denoted by $N$. Suppose $i$ is matched to $j$ in $N$, then for any $k=1, ..., n$, we have: 
    \[ v_{ij} - p_j \ge v_{ik} - p_k~\text{ and }~ v_{ij}-q_j \ge v_{ik}-q_k. \]
    Multiply by $\alpha$ and $1-\alpha$, 
    \begin{equation}\label{eqn:convex} \alpha v_{ij} - \alpha p_j \ge \alpha v_{ik} - \alpha p_k~\text{ and }~ (1-\alpha)v_{ij}-(1-\alpha)q_j \ge (1-\alpha)v_{ik}-(1-\alpha)q_k.
    \end{equation}
    Summing the two inequalities in (\ref{eqn:convex}), we get 
    \[ v_{ij} - r_j \ge v_{ik} - r_k. \]
    Thus, $(i, j)$ is in $G(\bm{r})$, for any $(i, j)\in N$. As a result, $N$ is in $G(\bm{r})$, so $\bm{r}\in S$.

    \item Element-wise maximum. For any $\bm{p}, \bm{q}\in S$, let $\bm{r}=(r_1, ..., r_n)$ where $r_i=\max\{p_i, q_i\}$. By Lemma \ref{lem:two_prices}, $\bm{p}$ and $\bm{q}$ induce at least one common perfect matching, denoted by $N$. Suppose $i$ is matched to $j$ in $N$, then for any $k=1, ..., n$, we have: 
    \[ v_{ij} - p_j \ge v_{ik} - p_k~\text{ and }~ v_{ij}-q_j \ge v_{ik}-q_k. \]
    Therefore:
    \[ v_{ij} - r_j = \min\left\{\begin{aligned} v_{ij} - p_j \\ v_{ij} - q_j\end{aligned}\right\} \ge \min\left\{\begin{aligned} v_{ik} - p_k \\ v_{ik} - q_k\end{aligned}\right\} =  v_{ik} - r_k.\]
    Thus, $(i, j)$ is in $G(\bm{r})$, for any $(i, j)\in N$. As a result, $N$ is in $G(\bm{r})$, so $\bm{r}\in S$.
    
    \item Element-wise minimum. This can be proved by switching ``$\max$'' and ``$\min$'' in the above argument for element-wise maximum. Here we present an alternative proof which uses the closure property under diagonal shifting and convex combination. Consider the intersection $\bm{x}$ between the diagonal line passing $\bm{r}$, $\{\bm{r}+t\bm{1}\mid t\in\mathbb{R}\}$, and line $\bm{p}\bm{q}=\{\alpha\bm{p} + (1-\alpha)\bm{q}\mid \alpha\in\mathbb{R}\}$. One can easily verify that $\bm{x}$ lies between $\bm{p}$ and $\bm{q}$, i.e., $\alpha\in[0, 1]$, thus $\bm{x}$ is market-clearing. Then $\bm{r}$ must be market-clearing because it can be obtained by shifting $\bm{x}$ diagonally. 

\end{itemize}

\section{Any Market-Clearing Prices Induce All Maximum Matchings}\label{sec:any_to_all}
In this section we answer the second question: if there are multiple matchings with the maximum weight in a bipartite graph, can each and every maximum matching be induced by some market-clearing price vector?

Mathematically, we ask: whether for any $M\in\opt$, there exists $\bm{p}\in\mathbb{R}^n$, such that $M\in\PM(\bm{p})$? Clearly, this question is in the opposite direction of Property \ref{property:included} ($\PM(\bm{p})\subseteq\opt$) which says that the matching induced by market-clearing prices must be a maximum matching. Our Theorem \ref{thm:all_maximum_matching} gives a positive answer, and even a stronger answer, saying that an \emph{arbitrary} market-clearing price vector satisfies our purpose of finding \emph{all} maximum matchings. 

\begin{theorem}[One versus all]\label{thm:all_maximum_matching}
For any market-clearing price vector $\bm{p}$, 
\begin{equation}
    \emptyset\ne \PM(\bm{p}) = \opt.
\end{equation}
\end{theorem}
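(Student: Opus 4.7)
The plan is to establish the non-trivial direction $\opt \subseteq \PM(\bm{p})$, since Property \ref{property:included} already supplies $\emptyset \ne \PM(\bm{p}) \subseteq \opt$. Fix any market-clearing $\bm{p}$ and any $M \in \opt$; since $\bm{p}$ is market-clearing, some $M^* \in \PM(\bm{p})$ exists, and both $M$ and $M^*$ are perfect matchings. I would show that every edge of $M$ already lies in $G(\bm{p})$ by analyzing $M \triangle M^*$.

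The graph-theoretic input is that the symmetric difference of two perfect matchings decomposes into vertex-disjoint alternating cycles of even length whose edges alternate between $M^*$ and $M$. Edges of $M$ lying outside $M \triangle M^*$ coincide with edges of $M^*$ and are automatically in $G(\bm{p})$, so it suffices to handle one such cycle $C$ at a time. Label its $M^*$-edges as $(i_1, j_1), \ldots, (i_k, j_k)$ and its $M$-edges as $(i_l, j_{l-1})$ for $l = 1, \ldots, k$, using the cyclic convention $j_0 := j_k$.

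Since $M^* \in \PM(\bm{p})$, buyer $i_l$'s $M^*$-match is a preferred product, so $v_{i_l, j_l} - p_{j_l} \ge v_{i_l, j_{l-1}} - p_{j_{l-1}}$ for each $l$. Summing these $k$ inequalities around the cycle, the price contributions cancel because both sides list the same multiset $\{p_{j_1}, \ldots, p_{j_k}\}$, leaving $\sum_l v_{i_l, j_l} \ge \sum_l v_{i_l, j_{l-1}}$; that is, the total weight of $M^*$ on $C$ is at least that of $M$ on $C$. But $\sw(M) = \sw(M^*)$ and the two matchings agree off $M \triangle M^*$, so their per-cycle weights must coincide. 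A sum of $k$ non-negative terms equalling zero forces each term to vanish, giving $v_{i_l, j_l} - p_{j_l} = v_{i_l, j_{l-1}} - p_{j_{l-1}}$ for every $l$. This says $j_{l-1}$ delivers buyer $i_l$ the same (maximum) payoff as $j_l$, so $(i_l, j_{l-1}) \in G(\bm{p})$. Aggregating over all cycles shows every edge of $M$ lies in $G(\bm{p})$, hence $M \in \PM(\bm{p})$.

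The main obstacle is the cycle bookkeeping: one must verify both that the price terms telescope exactly when summed around $C$ and that the global identity $\sw(M) = \sw(M^*)$ propagates down to equal weights on each individual cycle. Once these two observations are in place, the "sum of non-negatives equals zero implies each is zero" step finishes the argument with no further machinery; in particular, Lemma \ref{lem:two_prices} is not explicitly invoked here, though the argument is clearly in the same spirit.
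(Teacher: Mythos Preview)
Your argument is correct and takes a genuinely different route from the paper. The paper proceeds indirectly: it first invokes Lemma~\ref{lem:two_prices} to reduce the task to exhibiting, for a given $M\in\opt$, \emph{some} market-clearing price vector $\bm{q}$ with $M\in\PM(\bm{q})$; it then constructs such a $\bm{q}$ by solving a system of difference constraints via shortest paths, using the maximality of $M$ to rule out negative cycles. Your proof instead works directly with the given $\bm{p}$: you decompose $M\triangle M^*$ into alternating cycles, sum the preference inequalities for $M^*$ around each cycle so the prices telescope, and then use $\sw(M)=\sw(M^*)$ together with the nonnegativity of each cycle's discrepancy to force equality term by term. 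This is essentially a complementary-slackness argument carried out by hand, and it bypasses both Lemma~\ref{lem:two_prices} and the shortest-path construction. The paper's approach has the side benefit of explicitly producing a clearing price vector tailored to any prescribed maximum matching, whereas your route is shorter and more self-contained for the purpose of the theorem as stated.
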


In the rest of this section we prove Theorem \ref{thm:all_maximum_matching}.

Note that given Lemma \ref{lem:two_prices} (saying $\PM(\bm{p})=\PM(\bm{q})$ for any market-clearing $\bm{p}, \bm{q}$), all we need to do is to find one specific market-clearing price vector $\bm{p}$ such that $M\in\PM(\bm{p})$ for any given $M\in\opt$. For convenience, we re-label the nodes in $M$ such that $(1, 1), (2, 2), ..., (n, n) \in M$.

We need to solve the the following system of difference constraints: 
\begin{align*}
    \text{find } & (p_1, ..., p_i, ..., p_n) \\
    \text{s.t. } &  v_{11} - p_1 \ge v_{1j} - p_j, ~~ \forall j = 1, ..., n, \\
    & \vdots \\
    & v_{ii} - p_i \ge v_{ij} - p_j, ~~ \forall j = 1, ..., n, \\
    & \vdots \\
    & v_{nn} - p_n \ge v_{nj} - p_j, ~~ \forall j = 1, ..., n.
\end{align*}
Equivalently, 
\begin{align*}
    \text{find } & (p_1, ..., p_i, ..., p_n) \\
    \text{s.t. } &  p_1 \le p_j + v_{11} - v_{1j}, ~~ \forall j = 1, ..., n, \\
    & \vdots \\
    & p_i \le p_j + v_{ii} - v_{ij}, ~~ \forall j = 1, ..., n, \\
    & \vdots \\
    & p_n \le p_j + v_{nn} - v_{nj}, ~~ \forall j = 1, ..., n.
\end{align*}
This is a classic problem and can be solved via shortest paths. Consider a directed graph with $n+1$ nodes $0, 1, ..., n$. The source node $0$ has an directed edge $(0, i)$ with length $0$ to each node $i$. Add an edge from $j$ to $i$ with length $v_{ii} - v_{ij}$, for all $i=1, ..., n$, $j=1, ..., n$. 

Now we compute the length $L(i)$ of the shortest path from $0$ to each node $i$. We allow a path to go through an edge multiple times. So, if there is a cycle with negative total length in this directed graph, then there must be some $L(i)$ that can be $-\infty$ because we can follow that negative cycle infinite times. On the contrary, if there is no such negative cycle, all $L(i)$, $i=1, ..., n$ must be finite, and satisfy all the constraints $L(i)\le L(j) + v_{ii}-v_{ij}$ because there is an edge from $j$ to $i$ with length $v_{ii}-v_{ij}$. Setting  $\bm{p}=(L(1), ..., L(n))$ gives an solution. 

So, it remains to prove that in the directed graph constructed above, there is no negative cycle. Suppose on the contrary, there is an cycle $(j_1\to j_2\to \ldots\to j_{m}\to j_1)$ with negative total length:
\begin{align*}
 (v_{j_2j_2} - v_{j_2j_1}) + (v_{j_3j_3} - v_{j_3j_2}) + ... + (v_{j_mj_m} - v_{j_mj_{m-1}}) + (v_{j_1j_1} - v_{j_1j_{m}}) < 0.
\end{align*}
Rearranging, we get: 
\begin{align*}
 (v_{j_1j_1} + v_{j_2j_2} + ... + v_{j_mj_m}) < (v_{j_2j_1} + ... + v_{j_mj_{m-1}} + v_{j_1j_{m}}).
\end{align*}
Thus, if we replace the sub-matching
$(j_1, j_1), (j_2, j_2), ..., (j_m, j_m)$
in $M$ by the sub-matching
\[(j_2, j_1), ..., (j_m, j_{m-1}), (j_1, j_m),\] then the social welfare of $M$ is strictly increased, contradicting the assumption that $M\in\opt$. This concludes the proof.


\bibliographystyle{plain}
\bibliography{references}

\begin{thebibliography}{1}

\bibitem{burkard2012assignment}
Rainer Burkard, Mauro Dell'Amico, and Silvano Martello.
\newblock {\em Assignment problems, revised reprint}, volume 106.
\newblock Siam, 2012.

\bibitem{easley2010networks}
David Easley, Jon Kleinberg, et~al.
\newblock {\em Networks, crowds, and markets}, volume~8.
\newblock Cambridge university press Cambridge, 2010.

\bibitem{kuhn1955hungarian}
Harold~W Kuhn.
\newblock The hungarian method for the assignment problem.
\newblock {\em Naval research logistics quarterly}, 2(1-2):83--97, 1955.

\bibitem{pentico2007assignment}
David~W Pentico.
\newblock Assignment problems: A golden anniversary survey.
\newblock {\em European Journal of Operational Research}, 176(2):774--793,
  2007.

\bibitem{roth1992two}
Alvin~E Roth and Marilda Sotomayor.
\newblock Two-sided matching.
\newblock {\em Handbook of game theory with economic applications}, 1:485--541,
  1992.

\bibitem{shapley1971assignment}
Lloyd~S Shapley and Martin Shubik.
\newblock The assignment game i: The core.
\newblock {\em International Journal of game theory}, 1(1):111--130, 1971.

\end{thebibliography}

\newpage
\appendix

\end{document}